\documentclass[journal,twoside,web]{ieeecolor}
\usepackage{generic}
\usepackage{cite}
\usepackage{multirow}%
\usepackage{amsmath,amssymb,amsfonts}
\usepackage{algorithmic}
\usepackage{graphicx}
\usepackage{algorithm,algorithmic}
\usepackage{hyperref}
\usepackage{textcomp}
\usepackage{bbm}
\usepackage{mathrsfs}%
\usepackage[percent]{overpic}

\newtheorem{theorem}{Theorem}

\newtheorem{remark}{Remark}%

\newcommand{\T}{\mathrm{T}}
\newcommand{\setS}{\mathcal{S}}

\def\BibTeX{{\rm B\kern-.05em{\sc i\kern-.025em b}\kern-.08em
    T\kern-.1667em\lower.7ex\hbox{E}\kern-.125emX}}
\begin{document}
\pagestyle{empty}
\thispagestyle{empty}
\title{Observing the state of networks with directed higher-order interactions}
\author{Roberto Rizzello, \IEEEmembership{Member, IEEE}, Davide Salzano, Stefano Boccaletti, and Pietro De Lellis, \IEEEmembership{Senior Member, IEEE}
\thanks{This study was carried out within the 2022FHHHPC ``The Structure, Dynamics and Control of Network Systems With Higher-Order Interactions'' project – funded by European Union – Next Generation EU  within the PRIN 2022 program (D.D. 104 - 02/02/2022 Ministero dell’Università e della Ricerca). This manuscript reflects only the authors’ views and opinions and the Ministry cannot be considered responsible for them.}
\thanks{Roberto Rizzello, Davide Salzano and Pietro De Lellis are with the Department of Electrical Engineering and Information Technology, University of Naples Federico II, 80125 Naples, Italy, (e-mail: roberto.rizzello@unina.it, davide.salzano@unina.it, pietro.delellis@unina.it).}
\thanks{Stefano Boccaletti is with the 
International Research Center of Complexity Sciences, Hangzhou International Innovation Institute,
Beihang University, 311115 Hangzhou, China, with the 
Sino-Europe Complexity Science Center, North University of China, 030051 Taiyuan, China, and with the CNR—Institute of Complex Systems, Via Madonna del Piano 10, I-50019 Sesto Fiorentino, Italy, (e-mail: stefano.boccaletti@gmail.com ).}}

\maketitle

\begin{abstract}
We consider the problem of reconstructing the state of a network of nonlinear dynamical systems in the presence of directed higher-order interactions. Grounded on analytical convergence results, we propose an algorithmic observer design procedure that simultaneously selects the nodes to be measured and the observer gains. We complement the theoretical analysis with an exhaustive numerical investigation campaign that showcases the performance and robustness of the designed observer. Finally, the algorithmic procedure is used to fully reconstruct the opinions of a group of agents.
\end{abstract}

\begin{IEEEkeywords}
Higher-order interactions, observer design, network systems, directed hypergraphs.
\end{IEEEkeywords}

\thispagestyle{empty}

\section{Introduction}
\label{sec:introduction}
\IEEEPARstart{M}{any} natural and engineered systems exhibit complex collective behavior as the result of the local interactions of the agents composing the system  \cite{thurner2018introduction,sayama2015introduction}. The formation patterns performed by swarms of drones \cite{kabore2021distributed,giusti2023distributed}, voting dynamics \cite{galam2004contrarian,sobkowicz2016quantitative}, or the synchronized functioning of power and smart grids \cite{dorfler2013synchronization} are all complex collective behaviors that can be described in terms of network dynamical systems. In the last decades, tools from dynamical systems and graph theory have been used to investigate the mechanisms fostering the onset of such behaviors \cite{boccaletti2006complex, newman2011structure}. Directed graphs have been widely used to describe the interactions between the dynamical units composing the networks, and analytical conditions determining the emergence of collective behaviors have been derived \cite{bullo2024lectures, strogatz2001exploring}.

Controlling the emergent dynamics of network systems is a pressing research challenge \cite{astrom2011control,liu2016control, siljak2011decentralized}, and several distributed algorithms based on state-feedback have been proposed to steer the network trajectories towards a desired one \cite{porfiri2008criteria, delellis2018partial, delellis2011pinning}. However, the state of the network nodes is not always measurable, either for the prohibitive cost of the required sensors \cite{yan2015spectrum}, or because the quantities to be measured are inaccessible \cite{liu2013observability}. This makes a direct application of state-feedback control unfeasible, and calls for the need of building state observers for the network. 

While observability in network systems has been widely studied, see e.g. \cite{montanari2020observability,iudice2019node,hashimoto2010controllability,subasi2014quantitative,bianchin2016observability,letellier2018symbolic,whalen2015observability} and references therein, few works have tackled the problem of designing an observer for the state of the entire network.
%
%
Observer design techniques for networks of linear dynamical systems have been studied in \cite{han2018simple, zhang2017distributed}. However, these works cannot be applied to nonlinear interconnected systems.
The case of Lipschitz individual dynamics has been investigated in
\cite{wan2017distributed}, whereas a dissipativity-based observer has been proposed in \cite{schmidt2014observer}.
Finally, an observer is employed in \cite{zhang2023intermediate} as an intermediate step towards solving a fault estimation problem.

Another underlying assumption on which these works rely on is that interactions are pairwise, that is, they involve only two nodes at a time. 
However, in recent years it has been highlighted that in several application fields \cite{boccaletti2023structure, battiston2021physics} coupling between agents cannot be restricted to be pairwise, needing the inclusion of higher-order interactions where three or more nodes are simultaneously involved. For example, the effect of multi-body interactions has been shown to be fundamental in social networks \cite{schawe2022higher,rizzello2024modeling}, in chemical reactions transforming multiple reagents in multiple products \cite{ajemni2017directed,mann2023ai}, or in multi-battery equalization protocols \cite{ouyang2024unified}.
To the best of our knowledge, there is no existing method to build an observer when the systems are coupled through (directed or undirected) hypergraphs.

The main contribution of this manuscript is the design of a state-observer for networks of nonlinear systems characterized by multibody directed higher-order interactions. Specifically, using a Lyapunov-based approach, we provide two alternative conditions for local convergence of the estimation error to zero. Based on these theoretical results, we propose a novel algorithm to design a state observer that, by iteratively exploring the network, simultaneously identifies a sufficient set of nodes whose outputs need to be measured and devises suitable observer gains. 
In addition, we show that the designed observer effectively copes with large initial estimation errors, and is robust with respect to uncertainties on the node individual parameters and measurement uncertainties.

The outline of the paper is as follows. Section \ref{sec:preliminaries} introduces the necessary notation, and provides background on hypergraphs, signed graphs and hyperdiffusive coupling protocols, whereas in Section \ref{sec:model} we describe the network model and formulate the observation problem. The conditions for the convergence of the observer dynamics to the network state are given in Section \ref{sec:conv}, and the design of the observer is illustrated in Section \ref{sec:obsv}. Section \ref{sec:validation} is then devoted to validate the proposed observer on numerical testbeds, and to test its robustness to parametric uncertainties. Finally, we apply the proposed observer in the context of opinion dynamics in Section \ref{sec:opinion}. Conclusions are drawn in Section \ref{sec:conclusion}.

\section{Mathematical Preliminaries}
\label{sec:preliminaries}

\color{black}
\subsection{Notation}
Given $n\in\mathbb{N}$, $\mathbbm{1}_n$ and $0_n$ are vectors of all ones and zeros in $\mathbb{R}^{n}$, respectively. The horizontal and vertical concatenations of the set of $p$ vectors $v_1,\ldots,v_p\in\mathbb{R}^n$ are denoted with $[v_1,\ldots,v_p]$ and $[v_1;\ldots;v_p]$, respectively. The identity matrix in $\mathbb R^{n\times n}$ is denoted $I_n$. Given a matrix $A \in \mathbb{R}^{n\times m}$, $A^\T$ is its transpose,  
$\underline{\sigma}(A)$ its smallest singular value, while $\|A\|$ and $\|A\|_F$ are its Euclidean and Frobenius norm, respectively. Given a square matrix, $A^{\mathrm{sym}}=(A+A^\T)/2$ is its symmetric part, and $\mathrm{spec}(A)$ is its spectrum. Moreover, $A\in\mathbb R^{n\times n}$, $A>0$ ($A\ge 0$) means that $A$ is positive (semi) definite. Similarly, $A<0$ ($A\le 0$) means that $A$ is negative (semi) definite. Moreover, given $A,B\in\mathbb R^n$, $A\le B$ means $A-B\le 0$.
Finally, given a vector field $\varphi(z):\mathbb R^m\to \mathbb R^n$, $D_z\varphi\in\mathbb R^{n\times m}$ denotes its Jacobian matrix with respect to $z$.

\subsection{Directed Hypergraphs and Signed Graphs}
A directed hypergraph $\mathscr{H}$ is defined as a pair of sets. The first, $\mathcal{V}$, contains the $N$ nodes composing the hypergraph, while the second, $\mathcal{E}$, its directed hyperedges. The generic directed hyperedge $\epsilon$ is in turn defined as a pair of ordered and disjoint subsets of $\mathcal{V}$. The first and second subsets contain the \emph{tails} and \emph{heads} of hyperedges $\epsilon$ that are denoted with $\mathcal{T}(\epsilon)$ and $\mathcal{H}(\epsilon)$, respectively \cite{de2022pinning}. The cardinalities of such subsets are denoted with ${|\mathcal{T}(\epsilon)|}$ and ${|\mathcal{H}(\epsilon)|}$, whereas the cardinality of $\epsilon$ is $|\epsilon|=|\mathcal{T}(\epsilon)|+|\mathcal{H}(\epsilon)|$. 
Given a subset of nodes $\mathcal V_{\mathrm{sub}}\subseteq \mathcal V$, we denote $\mathcal E_{\mathcal V_{\mathrm{sub}}}$ as the subset of hyperedges whose tails and heads belong to $\mathcal V_{\mathrm{sub}}$, that is, $\mathcal E_{\mathcal V_{\mathrm{sub}}}=\{\epsilon \in\mathcal E: \mathcal T(\epsilon)\cup\mathcal H(\epsilon)\subseteq \mathcal V_{\mathrm{sub}}\}$.
Given a subset of hyperedges $\mathcal E_{\mathrm{sub}}\subseteq \mathcal E$, $\mathcal T(\mathcal E_{\mathrm{sub}})$ is the union of the sets of tails of all hyperedges in $\mathcal E_{\mathrm{sub}}$, that is, $\mathcal T(\mathcal E_{\mathrm{sub}})=\bigcup_{\epsilon\in\mathcal E_{\mathrm{sub}}}\mathcal T(\epsilon)$. A sample hypergraph, with the standard representation of directed hyperedges is illustrated in Figure \ref{fig:sample_hypergraph}.

\begin{figure}[!ht]
\centering
\begin{overpic}[scale=0.25]{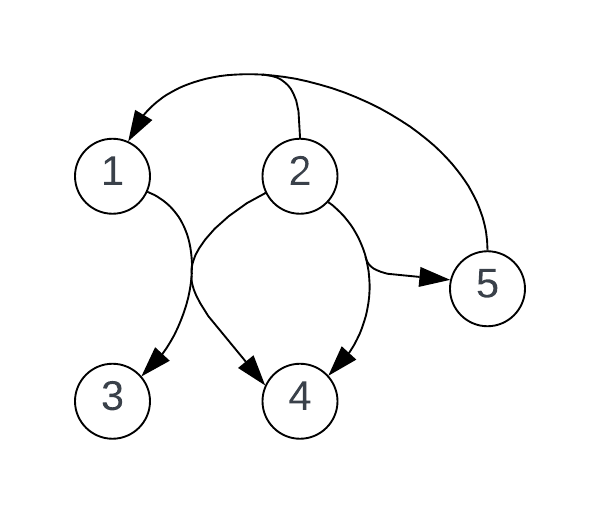}
\put (23,40) {$\epsilon_1$}
\put (53,40) {$\epsilon_2$}
\put (49,76) {$\epsilon_3$}
\end{overpic}
\caption{Sample directed hypergraphs with $\mathcal{V}=\{1,2,3,4,5\}$, $\mathcal{E}=\{\epsilon_1,\epsilon_2,\epsilon_3\}$. For example, the tail and head sets of hyperedge $\epsilon_1$ are $\mathcal{T}(\epsilon_1)=\{1,2\}$ and $\mathcal{H}(\epsilon_1)=\{3,4\}$, respectively.}
\vspace{-3mm}
\label{fig:sample_hypergraph}
\end{figure} 

Given $\mathcal{V}_1\subseteq\mathcal{V}$ and $\mathcal{V}_2\subseteq \mathcal{V}$, the hyperedge $\epsilon$ belonging to the set $\mathcal{E}^{\mathcal{V}_1,\mathcal{V}_2}\subseteq\mathcal{E}$ is such that $\mathcal{V}_1\subseteq \mathcal{T}(\epsilon) \land \mathcal{V}_2\subseteq \mathcal{H}(\epsilon)$. Similarly, all the hyperedges having node $j$ as a tail (head) and whose heads (tails) are elements of $\mathcal{\bar V}\subseteq\mathcal{V}$, belong to the set $\mathcal{E}^{j,\mathcal{\bar V}}$ ($\mathcal{E}^{\mathcal{\bar V},j}$). Finally, we denote with  $\mathcal{E}^{\star,j}\subseteq\mathcal{E}$ ($\mathcal{E}^{j,\star}\subseteq\mathcal{E}$) the subset containing all the hyperedges having node $j$ as a head (tail); we define the in-degree $d_j^{\mathrm{in}}$ and out-degree $d_j^{\mathrm{out}}$ of a node $\nu_j$ as the cardinality of the latter two sets, that is, $d_j^{\mathrm{in}}=|\mathcal{E}^{\star,j}|$ and $d_j^{\mathrm{out}}=|\mathcal{E}^{j,\star}|$.

A weighted signed graph $\mathcal G$ is defined by the triple $\{\mathcal V,\mathcal E, \mathcal Q\}$, where $\mathcal V$ is the set of nodes, $\mathcal E\subseteq \mathcal V\times \mathcal V$ is the set of edges, and the function $\mathcal Q: \mathcal V\times \mathcal V \rightarrow \mathbb R$ associates $0$ to each pair $(i,j)\in\mathcal V\times \mathcal V$ that is not in $\mathcal E$, and a non-zero weight to each edge in $\mathcal E$. Different from standard weighted digraphs, also negative weights can be associated to edges. Given a subset of nodes $\mathcal V_{\mathrm{sub}}\subseteq \mathcal V$, we denote $\mathcal G_{\mathcal V_{\mathrm{sub}}}$ the subgraph induced by $\mathcal V_{\mathrm{sub}}$.

\subsection{Hyperdiffusive Coupling Protocol}\label{sec:hyperdiff}
Next, let us associate to the $i$-th node in $\mathcal V$ a vector $x_i\in\mathbb R^n$ describing its state, for $i=1,\ldots,N$.
Following  \cite{della2023emergence}, given a hyperedge $\epsilon\in\mathcal E$,
we introduce a hyperdiffusive coupling protocol
\begin{equation} \label{eq:hyperdiffusive_coupling}
    g(x_\epsilon^{\tau} \alpha_\epsilon-x_\epsilon^{h} \beta_\epsilon),
\end{equation}
where $g:\mathbb{R}^n\to\mathbb{R}^n$ is the nonlinear coupling function. $g$ depends on the difference between the convex combinations of the states of the hyperedge's tails and heads, respectively. The combination coefficients $\alpha_\epsilon \in \mathbb{R}^{|\mathcal{T}(\epsilon)|}$ and $\beta_\epsilon\in \mathbb{R}^{|\mathcal{H}({\epsilon})|}$ are such that $\alpha_\epsilon^T\mathbbm{1}_{|\mathcal{T}(\epsilon)|}=\beta^T_\epsilon\mathbbm{1}_{|\mathcal{H}(\epsilon)|}=1$;
matrices $x_{\epsilon}^{\tau} \in \mathbb{R}^{n\times|\mathcal{T}(\epsilon)|}$ and $x_{\epsilon}^{h}\in \mathbb{R}^{n\times|\mathcal{H}(\epsilon)|}$, horizontally stack the states of the nodes that are tails and heads of the hyperedge $\epsilon$, respectively.

In \cite{della2023emergence}, it has been shown that, locally, the hyperdiffusive interactions over a directed hypergraph can be equivalently described by pairwise interactions over the associated signed graph. The association starts from the observation that the argument of the nonlinear function $g$ in \eqref{eq:hyperdiffusive_coupling} can be rewritten as
\begin{equation}
    \sum_{j\in\mathcal T(\epsilon)}(\tilde \alpha_\epsilon)_j(x_j-x_i)-\sum_{j\in\mathcal H(\epsilon)}(\tilde \beta_\epsilon)_j(x_j-x_i),
\end{equation}
where $\tilde \alpha_\epsilon\in\mathbb R^N$ ($\tilde\beta_\epsilon\in\mathbb R^N$) is a vector whose element $j$ is
0 if node $j$ is not a tail (head) of $\epsilon$, whereas, if $j$ is a tail
(head) of $\epsilon$, it is equal to the weight associated with that tail
(head). In simple terms, this implies that the associated signed graph will have positive edges from each tail to each head, and negative edges between the heads.

In the presence of hyperdiffusive coupling, we can then define the largest connected component of a directed hypergraph $\mathscr H=\{\mathcal V, \mathcal E\}$ as $\{\mathcal V_{\mathrm{sub}},\mathcal E_{\mathcal V_{\mathrm{sub}}}\}$, where $\mathcal V_{\mathrm{sub}}\subseteq \mathcal V$ is the set of nodes of the largest connected component of the associated signed graph.

\section{Network model and problem formulation}\label{sec:model}
%
%
%
\begin{figure}
    \centering
    \includegraphics[width=0.8\linewidth]{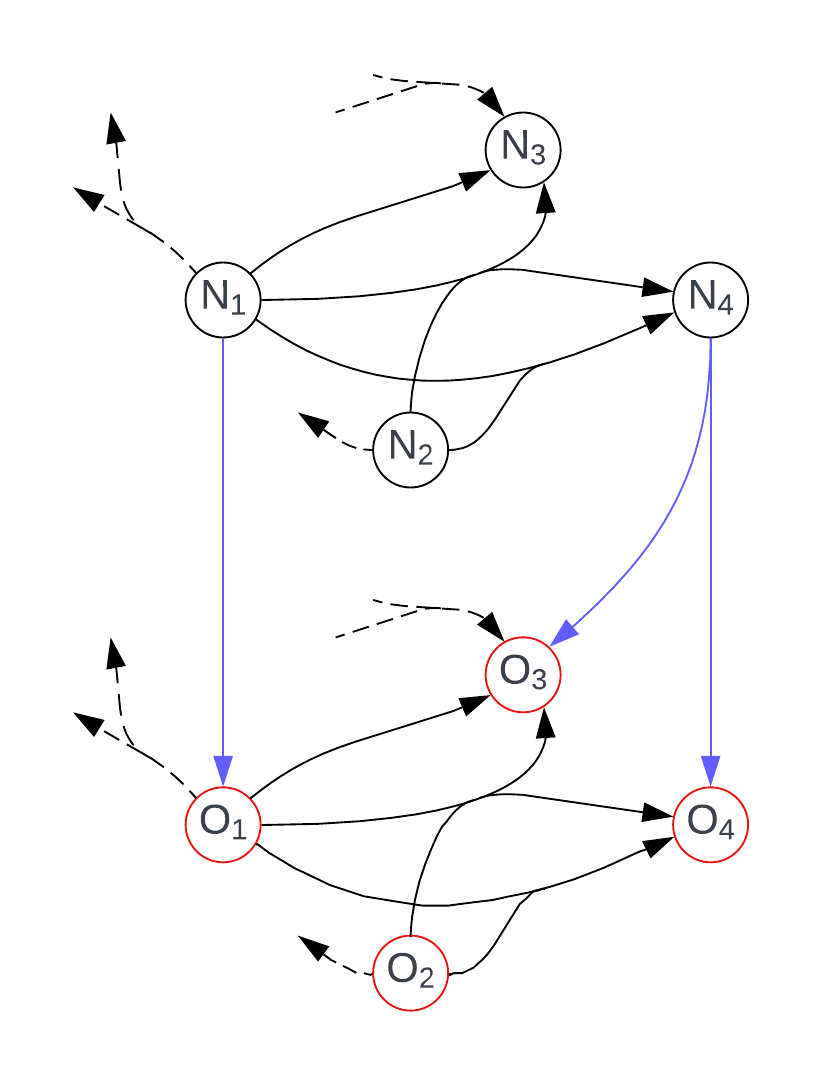}
    \caption{Schematic of the network state observer. The nodes of the network to be observed are represented in black, whereas the corresponding nodes of the observer are depicted in red. The nodes are coupled through a directed hypergraph, highlighting the presence of multi-body interactions. 
    The information flow from the observed network is represented by blue arrows: this means, for instance, that the output of node 4 is sent to nodes 3 and 4 of the observers.}
    \label{fig:master_slave_configuration}
\end{figure}
Let us consider a network of $N$ nodes coupled on a directed hypergraph $\mathscr H$, whose dynamics are given by 
\begin{equation}
\label{eq:network_dynamics}
\dot{x}_i= f(x_i)+\sum\limits_{\epsilon \in \mathcal{E}^{\star,i}} \sigma_{\epsilon} g(x_\epsilon^{\tau} \alpha_\epsilon-x_\epsilon^{h} \beta_\epsilon),\quad i=1,\ldots,N,
\end{equation}
%
where $g(x_\epsilon^{\tau} \alpha_\epsilon-x_\epsilon^{h} \beta_\epsilon)$ is the hyperdiffusive coupling protocol as defined in Section \ref{sec:hyperdiff}, and $\sigma_{\epsilon}$ is the coupling strength associated to hyperedge $\epsilon$.
%
%

Here, we consider the problem of observing the state of network \eqref{eq:network_dynamics}, when only the output of a subset $\mathcal O\subseteq \mathcal V$ of the nodes can be measured, that is,
\begin{equation}\label{eq:output}
    y_i=h(x_i),\qquad i\in\mathcal O,
\end{equation}
where $y_i\in\mathbb R^p$, $p\le n$, is the output of node $i$, and $h:\mathbb R^n\to\mathbb R^p$ is the output function.

Specifically, for a given output function $h$, we aim to design an observer and select a set $\mathcal O$ so that we can reconstruct the state of the entire network.

The dynamics we chose for our observer follows a prediction-correction paradigm, inspired to the classical Luemberger observer, and is given by
\begin{equation}
\label{eq:observer}
%
\dot{\hat x}_i=f(\hat x_i)+\sum_{j\in\mathcal O}L_{ij}( y_j- h (\hat x_j))+\sum\limits_{\epsilon \in \mathcal{E}^{\star,i}}\sigma_\epsilon g(\hat x_\epsilon^{\tau}\alpha_\epsilon-\hat x_\epsilon^{h} \beta_\epsilon),
\end{equation}
for $i=1,\dots,N$. Here, $L_{ij}:\mathbb R^{n\times p}$ is the correction matrix of the observer, and $h(\hat x_j)$ is the estimated output.

By defining $e_i=x_i-\hat x_i$ as the observation error of node $i$, we can then formally state our observation problem:

\textit{Observation problem.} Given network \eqref{eq:network_dynamics}, we aim to select the observed nodes $\mathcal O$ and the correction gain matrices $L_{ij}$, with $i\in\mathcal V$ and $j\in\mathcal O$, such that the observer dynamics \eqref{eq:observer} locally asymptotically converge to the network states (or, equivalently, the observation error locally asymptotically converges to zero), that is,
\begin{equation*}
\begin{aligned}
    &\exists \delta>0 : \Vert e(0) \Vert < \delta \implies \lim\limits_{t\to+\infty} e_i(t) = 0,\quad\\
    &\forall x_i(0)\in\mathcal W\subseteq \mathbb{R}^n, \, i=1,\ldots,N. 
\end{aligned}
\end{equation*}

\section{Convergence Analysis}\label{sec:conv}
We start by writing the error dynamics as 
\begin{align}\label{eq:error_non_lin_dynamics}
    \dot{e}_i&=f(x_i)-f(\hat x_i) - \sum_{j\in\mathcal O}L_{ij}(h(x_j)- h(\hat x_j))\nonumber\\&+\sum\limits_{\epsilon \in \mathcal{E}^{\star,i}} \sigma_\epsilon\big(g(x_\epsilon^{\tau}\alpha_\epsilon-x_\epsilon^{h} \beta_\epsilon)-g(\hat x_\epsilon^{\tau} \alpha_\epsilon-\hat x_\epsilon^{h} \beta_\epsilon)\big),
\end{align}
for $i=1,\ldots,N$.
Notice that $e_i(t) = 0$ for all $i=1,\ldots,N$ is an equilibrium point for \eqref{eq:error_non_lin_dynamics}, and we can then study its local stability by linearizing its dynamics \eqref{eq:error_non_lin_dynamics} around it, thus obtaining 
\begin{align}\label{eq:final_form}
    \dot {\tilde e}_i&=D_xf(\hat x_i) \tilde e_i-\sum_{j\in\mathcal O}L_{ij}D_xh(\hat x_j)\tilde e_j\nonumber \\&
    +\sum\limits_{\epsilon \in \mathcal{E}^{\star,i}}\sigma_\epsilon D_xg(\hat x_\epsilon^{\tau} \alpha_\epsilon-\hat x_\epsilon^{h} \beta_\epsilon)(\tilde e_\epsilon^{\tau} \alpha_\epsilon-\tilde e_\epsilon^{h} \beta_\epsilon)
\end{align}
for all $i=1,\ldots,N$,
where $\tilde e_i$ is the linearized observation error of node $i$ and matrices $\tilde e_\epsilon^\tau$ and $\tilde e_\epsilon^h$ horizontally stack the linearized errors of the nodes that are tails and heads of the hyperedge $\epsilon$, respectively.

Let us consider a subset $\mathcal S=\{s_1,\ldots,s_M\}\subseteq\mathcal  V$ of nodes.

Next, we define matrix $A_{\mathcal S}(t)\in\mathbb R^{nM\times nM}$ as
\begin{align}\label{eq:A_S}
    A_{\mathcal{S}}(t)=\begin{bmatrix}
        A_{11} & \ldots &  A_{1M} \\
        \vdots & \ddots & \vdots\\
        A_{M1} & \ldots &  A_{MM}
    \end{bmatrix},
\end{align}
with, for all $i=1,\ldots,M$,
\begin{align}\label{eq:Phi_term}
        A_{ii}&=D_xf(\hat x_{s_i})-\mathcal I_{s_i}(\mathcal O)L_{s_i s_i}D_xh(\hat x_{s_i})\nonumber\\
        &-\sum\limits_{\epsilon\in\mathcal{E}^{\star,s_i}}\sigma_{\epsilon}(\beta_\epsilon)_{s_i}D_xg(\hat x_\epsilon^{\tau}\alpha_\epsilon-\hat x_\epsilon^{h}\beta_\epsilon),
    \end{align}
\begin{align}\label{eq:Gamma_term}
    A_{ij}&=-\mathcal I_{s_j}(\mathcal O)L_{s_i s_j}D_xh(\hat x_{s_j})\\
    &+\sum\limits_{\epsilon\in\mathcal{E_S}^{s_j,s_i}}\sigma_{\epsilon}(\alpha_\epsilon)_{s_j}D_xg(\hat x_\epsilon^{\tau}\alpha_\epsilon-\hat x_\epsilon^{h}\beta_\epsilon)\nonumber\\
    &-\sum\limits_{\epsilon\in\mathcal{E}^{\star,(s_i,s_j)}}\sigma_{\epsilon}(\beta_\epsilon)_{s_j}D_xg(\hat x_\epsilon^{\tau}\alpha_\epsilon-\hat x_\epsilon^{h}\beta_\epsilon),\nonumber
\end{align}
for $j=1,\ldots,M, j\ne i$. In \eqref{eq:Phi_term} and \eqref{eq:Gamma_term}, $\mathcal I_{z}(\mathcal O)$ is the indicator function, which is 1 if $z\in\mathcal O$, and 0 otherwise.

Next, we define $b_\mathcal{S}(t)=[b_{s_1}; \ldots; b_{s_M}]\in \mathbb{R}^{nM}$, with 

\begin{equation}\label{eq:b_term}
        b_{s_i}=\sum\limits_{\epsilon\in\mathcal{E}^{\star,s_{i}}}\sigma_\epsilon D_xg(\hat x_\epsilon\alpha_\epsilon-\hat x_\epsilon\beta_\epsilon)\sum_{\tau(\epsilon,j)\notin \mathcal S}(\alpha_\epsilon)_j\tilde e_
        {\tau(\epsilon,j)},
\end{equation}
where $\tau(\epsilon,j)$ is the function that associates to the $j$-th tail of $\epsilon$ its label in $\mathcal V$. Finally, let us define $\hat x=[\hat x_1;\dots;\hat x_N]$. 
    
We can now state the following theorems:

\begin{theorem}\label{thm:slowly_varying}
Given a subset of nodes $\mathcal S=\{s_1,\ldots,s_M\}\subseteq\mathcal  V$. If
    \begin{enumerate}
        \item [H1)] $A_\setS(t)$ and $D_x g(t)$ are bounded for all $t$, and $A_\setS(t)$ has a uniform Hurwitz margin;
        \item [H2)] there exists a positive scalar $\varepsilon$ such that $$\|\dot A_\setS(t)\|<\frac{\underline{\sigma}(Q_\setS)\underline{\sigma}(A_\setS(t)\oplus A_\setS(t))^2}{2\|Q_\setS\|_{\mathrm{F}}}(1-\varepsilon)$$ for all $\hat x(t)$, with $Q_\setS \in \mathbb{R}^{nM\times nM}$ being a positive definite symmetric matrix;
        \item [H3)] $ e_j(t)$ locally asymptotically converges to zero for all $j\in\mathcal{T}(\mathcal{E}^{\star,S}\setminus\mathcal{E}_\setS)$;
        
    \end{enumerate}
    
    then the error $e_{s_j}(t), j=1,\ldots,M,$ locally asymptotically converges to zero.
\end{theorem}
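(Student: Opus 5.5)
The plan is to write the linearized error dynamics \eqref{eq:final_form} restricted to the nodes of $\setS$ as a linear time-varying system driven by an exogenous input. Stack $\tilde e_\setS=[\tilde e_{s_1};\ldots;\tilde e_{s_M}]$. Starting from \eqref{eq:final_form}, group the terms as follows. Terms involving $\tilde e_{s_i}$ itself, namely the drift, the diagonal correction and the head contributions with $(\beta_\epsilon)_{s_i}$, go into $A_{ii}$. Terms involving $\tilde e_{s_j}$ with $s_j\in\setS$ go into $A_{ij}$. The remaining tail contributions come from nodes outside $\setS$ and form $b_\setS$. This gives
\begin{equation*}
\dot{\tilde e}_\setS=A_\setS(t)\tilde e_\setS+b_\setS(t).
\end{equation*}
The bookkeeping should be checked carefully. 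Heads of hyperedges in $\mathcal E^{\star,s_i}$ are in $\setS$ or else are also driven only through $\tilde e$ of nodes whose dynamics we include. Any head contribution from outside $\setS$ should appear in $b_\setS$ or be excluded by the definition of $\mathcal E^{\star,\setS}$; I will take the definitions as given. By H3, every tail outside $\setS$ that feeds $\setS$ has error converging to zero. Together with boundedness of $D_xg$ from H1, this gives $b_\setS(t)\to 0$, and $b_\setS$ is small whenever those errors start small.

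Next I establish uniform exponential stability of the homogeneous system $\dot z=A_\setS(t)z$ using a frozen-time Lyapunov function. For each $t$, let $P(t)$ be the unique solution of $A_\setS^\T P+PA_\setS=-Q_\setS$. The uniform Hurwitz margin in H1 guarantees that $P(t)$ exists, and the Kronecker-sum form $\mathrm{vec}(P)=-(A_\setS^\T\oplus A_\setS^\T)^{-1}\mathrm{vec}(Q_\setS)$ gives
\begin{equation*}
\|P\|\le\|Q_\setS\|_F/\underline\sigma(A_\setS\oplus A_\setS).
\end{equation*}
Boundedness of $A_\setS$ yields a uniform lower bound $P\ge c_1 I$, with $c_1>0$, for example from $\underline\sigma(Q_\setS)\le 2\|A_\setS\|\,\|P\|$ applied to the smallest eigenvalue of $P$. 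Differentiating the Lyapunov equation gives $A_\setS^\T\dot P+\dot PA_\setS=-(\dot A_\setS^\T P+P\dot A_\setS)$. Hence
\begin{equation*}
\|\dot P\|\le 2\|\dot A_\setS\|\,\|P\|/\underline\sigma(A_\setS\oplus A_\setS)\le 2\|\dot A_\setS\|\,\|Q_\setS\|_F/\underline\sigma(A_\setS\oplus A_\setS)^2.
\end{equation*}
With $V=z^\T P(t)z$ we get $\dot V=z^\T(-Q_\setS+\dot P)z\le-(\underline\sigma(Q_\setS)-\|\dot P\|)\|z\|^2$. Hypothesis H2 is tailored to give $\|\dot P\|<(1-\varepsilon)\underline\sigma(Q_\setS)$, up to the constant factor that must be matched exactly. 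Then $\dot V\le-\varepsilon\underline\sigma(Q_\setS)\|z\|^2\le-cV$, which gives uniform exponential stability. The main obstacle is this step: matching the constants in H2 to the bound on $\dot P$, and making sure that "for all $\hat x(t)$" covers trajectories staying in the relevant neighbourhood.

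Finally, an exponentially stable LTV system driven by a vanishing input has a vanishing state. Along solutions of the forced system, $\dot V\le -cV+2\|P\|\,\|z\|\,\|b_\setS\|$, so an input-to-state stability estimate of the form $\|\tilde e_\setS(t)\|\le k e^{-\lambda t}\|\tilde e_\setS(0)\|+\gamma\sup_{s\ge t/2}\|b_\setS(s)\|+(\text{decaying})$ gives $\tilde e_\setS\to0$. To pass from the linearization to the nonlinear error \eqref{eq:error_non_lin_dynamics}, I use $V$ as a Lyapunov function for the nonlinear system. The higher-order remainder is $o(\|e\|)$ uniformly, assuming $f,g,h$ are $C^2$ with bounded second derivatives on the region. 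It is dominated by $\varepsilon\underline\sigma(Q_\setS)\|e\|^2$ in a small ball, and local ISS with respect to the locally convergent exogenous errors from H3 then yields local asymptotic convergence of $e_{s_j}$, $j=1,\ldots,M$.
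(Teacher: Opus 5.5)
Your proposal is correct and follows essentially the same route as the paper: the frozen-time Lyapunov equation $A_\setS^\T P_\setS+P_\setS A_\setS=-Q_\setS$, the Kronecker-sum bound $\|\dot P_\setS\|\le 2\|\dot A_\setS\|\,\|Q_\setS\|_{\mathrm{F}}/\underline{\sigma}(A_\setS\oplus A_\setS)^2$, and a comparison argument on $\sqrt{V}$ with $b_\setS(t)\to 0$ from H1 and H3; the bound combined with H2 gives exactly $\|\dot P_\setS\|<(1-\varepsilon)\underline{\sigma}(Q_\setS)$, so the constants you were unsure about match with no extra factor. Your explicit uniform lower bound on $P_\setS$ and your passage from the linearized to the nonlinear error dynamics (under your added smoothness assumption) are refinements that the paper's proof leaves implicit, not a different approach.
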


\begin{proof}
    We study the stability properties of the linearized error dynamics associated to the nodes belonging to $\mathcal{S}$, which are described by
    \begin{align}\label{eq:es_dynamics}
        \dot {\tilde e}_{\mathcal{S}}=A_{\mathcal{S}}(t)\tilde e_\mathcal{S} +b_{\mathcal{S}}(t),
    \end{align}
    where $\tilde e_\setS=[\tilde e_{s_1},\ldots,\tilde e_{s_M}]$.
     From H1 and \cite{amato2002new}, there exists a time-varying matrix $P(t)$ symmetric and positive definite such that 
     \begin{equation}\label{eq:Lyap_eq}
A_\mathcal{S}(t)^\T P_\mathcal{S}(t)+P_{\mathcal{S}}(t)A_{\mathcal{S}}(t)=-Q_\setS.
    \end{equation}
We can then introduce the following Lyapunov function candidate: 
    \begin{equation}
        V = \frac{1}{2}\tilde e_\setS^\T P_\setS(t)\tilde e_\setS,
    \end{equation}
    where $P_\setS(t)>0$ for all $t$. The time-derivative of $V$ then reads
    \begin{equation}\label{eq:Lyap_derivative}
        \dot V=\Big[(\tilde e_\setS^\T A_\setS(t)^\T+b_\setS^\T)+\tilde e_\setS^\T\dot P_\setS\Big]\tilde e_\setS+\tilde e_\setS^\T P_\setS(A_\setS(t)^\T \tilde e_\setS+b_\setS(t)).
    \end{equation}
    From \eqref{eq:Lyap_eq}, and noting that $\dot P_\setS\le \|\dot P_\setS\|I_{nM}$ and $-Q_\setS\le -\underline{\sigma}(Q_\setS)I_{nM}$, we obtain
    \begin{align}\label{eq:norm_boundingp}
        \dot V(\tilde e_\setS)&=-\tilde e_\setS^\T(Q_\setS-\dot P_\setS)\tilde e_\setS+2\tilde e_\setS^\T P_\setS b_\setS(t) \nonumber\\& 
        \le-\Bigg(1-\frac{\|\dot P_\setS\|}{\underline{\sigma}(Q_\setS)}\Bigg)\underline{\sigma}(Q_\setS)\|\tilde e_\setS\|^2+2\tilde e_\setS^\T P_\setS b_\setS(t)
    \end{align}
From \cite[eq. (10)]{amato2002new}, we know that
    \begin{equation}\label{eq:bound_garofalo}
        \|\dot P_\setS\|\leq\frac{2\|\dot A_\setS(t)\|\|Q_\setS\|_{\mathrm{F}}}{\underline{\sigma}^2(A_\setS(t)\oplus A_\setS(t))}.
    \end{equation}
    Consequently, we have
    \begin{align}\label{eq:norm_bounding}
        &\dot V\leq-\Bigg(1-\frac{2\|\dot A_\setS(t)\|\|Q_\setS\|_{\mathrm{F}}}{\underline{\sigma}(A_\setS(t)\oplus A_\setS(t))^2\underline{\sigma}(Q_\setS)}\Bigg)\underline{\sigma}(Q_\setS)\|\tilde e_\setS\|^2\nonumber\\
        &+2\tilde e_\setS^\T P_\setS b_\setS(t)
        =-(1-\varsigma_\setS(t))\underline{\sigma}(Q_\setS)\|\tilde e_\setS\|^2+2\tilde e_\setS^\T P_\setS b_\setS(t),
    \end{align}
    where $0<\varsigma_\setS(t)=\frac{2\|\dot A_\setS(t)\|\|Q_\setS\|_{\mathrm{F}}}{\underline{\sigma}(A_\setS(t)\oplus A_\setS(t))^2\underline{\sigma}(Q_\setS)}< 1-\varepsilon$ from H2.
    Additionally, given that 
    \begin{equation*}
    -||\tilde e_\setS||^2\leq
    -\frac{\|P_\setS^{\frac{1}{2}}\tilde e_\setS\|^2}{\|P_\setS^{\frac{1}{2}}\|^2} =
    - \frac{2V}{\|P_\setS^{\frac{1}{2}}\|^2},
    \end{equation*}
    we can write 
    \begin{equation}
    \dot V\leq-(1-\varsigma_\setS(t))\underline{\sigma}(Q_\setS)\frac{2V}{\|P_\setS^{\frac{1}{2}}\|^2}+2\tilde e_\setS^\T P_\setS b_\setS(t)
    \end{equation}
    Since 
    \begin{multline*}
        2\tilde e_\setS^\T P_\setS b_\setS(t)\leq 2\|P_\setS^\frac{1}{2}\|\|P_\setS^\frac{1}{2}\tilde e_\setS\|_2\|b_\setS(t)\|\\ = 2\sqrt{2}\sqrt{V}\|P_\setS^\frac{1}{2}\|\|b_\setS(t)\|,
    \end{multline*}%
    we can write
    \begin{equation}
        \dot V\leq-(1-\varsigma_\setS(t))\underline{\sigma}(Q_\setS)\frac{2V}{\|P_\setS^\frac{1}{2}\|^2}+2\sqrt{2}\sqrt{V}\|P_\setS^\frac{1}{2}\|\|b_\setS(t)\| .
    \end{equation}

    Finally, from H1, there exists a finite scalar $\xi$ such that $\|P_\setS^{\frac{1}{2}}\|\leq \xi$. Therefore,
    \begin{equation} \label{eq:V_comparison}
        \dot V\leq-d_1(t) V+d_2\sqrt{V}\|b_\setS(t)\|,
    \end{equation}
where $d_1(t)=\big(2(1-\varsigma_\mathcal S)\underline{\sigma}(Q_{\mathcal S}) \big)/\xi^2$, and $d_2=2\sqrt{2}\xi$.

    Now, taking $z = \sqrt{V}$, we have
    \begin{equation}
        \dot z = \frac{\dot V}{2\sqrt{V}} \leq -\frac{d_1(t)}{2} z+\frac{d_2}{2}\|b_\setS(t)\|.
    \end{equation}

    Recalling that $\varsigma_\setS(t)<1-\varepsilon$, there exists a positive scalar $c$ such that $-d_1 < -2c $. Hence,
    \begin{equation} \label{eq:Z_dyn}
        \dot z \leq -cz+\frac{d_2}{2}\|b_\setS(t)\|
    \end{equation}
%
    
    %
%
From H1 and H3, $\lim\limits_{t\to+\infty}(\|b_\setS(t)\|)=0$, hence the thesis follows.
    

\end{proof}
\begin{theorem}\label{thm:lmi}
    Given a subset of nodes $\mathcal S=\{s_1,\ldots,s_M\}\subseteq\mathcal  V$. If
    \begin{enumerate}
        \item [H1)] $A_\setS^{\mathrm{sym}}(t)$ has a uniform Hurwitz margin for all $t$;
        \item [H2)] $D_x g(t)$ is bounded for all $t$;
        \item [H3)] $e_j(t)$ locally asymptotically converges to zero  for all $j\in\mathcal{T}(\mathcal{E}^{\star,S}\setminus\mathcal{E}_\setS)$;
        
    \end{enumerate}
    
    then the error $e_{s_j}(t), j=1,\ldots,M,$ locally asymptotically converges to zero.
\end{theorem}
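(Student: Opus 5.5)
The plan is to follow the same route as the proof of Theorem~\ref{thm:slowly_varying}, but with the simpler Lyapunov function $V=\frac{1}{2}\tilde e_\setS^\T\tilde e_\setS$. The identity weight makes the time-varying solution of the Lyapunov equation unnecessary. As in \eqref{eq:es_dynamics}, the linearized error of the nodes in $\setS$ obeys $\dot{\tilde e}_\setS=A_\setS(t)\tilde e_\setS+b_\setS(t)$, where $b_\setS$ collects the contributions of tails lying outside $\setS$. Differentiating $V$ along these trajectories gives
\begin{equation*}
\dot V=\tilde e_\setS^\T A_\setS^{\mathrm{sym}}(t)\tilde e_\setS+\tilde e_\setS^\T b_\setS(t),
\end{equation*}
because the skew-symmetric part of $A_\setS$ does not contribute to the quadratic form.

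The first key step uses H1. A uniform Hurwitz margin for $A_\setS^{\mathrm{sym}}(t)$ means there is $\mu>0$ with $A_\setS^{\mathrm{sym}}(t)\le-\mu I_{nM}$ for all $t$; since the matrix is symmetric, its eigenvalues are real and this is exactly the margin condition. Then Cauchy--Schwarz gives
\begin{equation*}
\dot V\le-\mu\|\tilde e_\setS\|^2+\|\tilde e_\setS\|\|b_\setS(t)\|=-2\mu V+\sqrt{2}\sqrt{V}\|b_\setS(t)\|.
\end{equation*}
Setting $z=\sqrt{V}=\|\tilde e_\setS\|/\sqrt2$, one obtains, wherever $z>0$,
\begin{equation*}
\dot z\le-\mu z+\tfrac{\sqrt2}{2}\|b_\setS(t)\|.
\end{equation*}
Where $z=0$, I would either argue directly on $\|\tilde e_\setS\|$ via its upper Dini derivative or use $V_\delta=\sqrt{V+\delta}$ and let $\delta\to0$. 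This is the same scalar comparison inequality as \eqref{eq:Z_dyn}, now with constant rate and no requirement on $\dot A_\setS$. This is why H2 of Theorem~\ref{thm:slowly_varying} disappears.

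The second step shows $b_\setS(t)\to0$. By \eqref{eq:b_term}, each $b_{s_i}$ is a finite sum of the terms $\sigma_\epsilon D_xg(\cdot)(\alpha_\epsilon)_j\tilde e_{\tau(\epsilon,j)}$ over tails not in $\setS$, that is, tails in $\mathcal T(\mathcal E^{\star,\setS}\setminus\mathcal E_\setS)$. H2 bounds $D_xg$ uniformly, the coefficients are fixed, and H3 gives local convergence of those errors to zero. Hence $\|b_\setS(t)\|\le K\sum_j\|\tilde e_j(t)\|\to0$.

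The last step concludes via the comparison lemma: $z(t)\le e^{-\mu t}z(0)+\frac{\sqrt2}{2}\int_0^te^{-\mu(t-s)}\|b_\setS(s)\|\,ds$, and the convolution of an exponentially decaying kernel with a signal tending to zero tends to zero. Thus $\tilde e_\setS\to0$. Local asymptotic convergence of the nonlinear error $e_{s_j}$ then follows by linearization, as in Theorem~\ref{thm:slowly_varying}.

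I expect the main obstacle to be this final passage from linearized to nonlinear convergence, together with making "locally" precise for a cascade driven by $b_\setS$. The issue is that $b_\setS$ is small only after the upstream errors are small, and the linearization is valid only for small errors. My first attempt would be to treat the higher-order remainder as a perturbation dominated by $\frac{\mu}{2}\|\tilde e_\setS\|^2$ in a neighbourhood of the origin, which requires twice-differentiable $f,g,h$ and bounded second derivatives along $\hat x$. I would then invoke an input-to-state-stability style argument on the $z$-inequality, so that the input $b_\setS$ both remains small and vanishes.
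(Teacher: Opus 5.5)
Your proposal is correct and follows essentially the paper's proof: the same Lyapunov function $V=\frac12\tilde e_\setS^\T\tilde e_\setS$, the bound $\dot V\le -d_1(t)V+\sqrt2\sqrt V\|b_\setS(t)\|$ with $d_1=-2\lambda_{\max}(A_\setS^{\mathrm{sym}})$ bounded away from zero by H1, $\|b_\setS\|\to0$ from H2--H3, and the substitution $z=\sqrt V$ used after \eqref{eq:V_comparison} in Theorem~\ref{thm:slowly_varying}. Your Dini-derivative handling of $z=0$ and the explicit convolution bound make rigorous what the paper calls ``similar steps,'' and your final linearized-to-nonlinear discussion goes beyond the paper, which simply takes $\tilde e_\setS\to0$ as the thesis.
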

\begin{proof}
    Let us consider the following Lyapunov function candidate:
    \begin{equation}
        V=\frac{1}{2}\tilde e_\setS^\T \tilde e_\setS.
    \end{equation}
From \eqref{eq:es_dynamics}, the time-derivative of $V$ reads
\begin{equation}
    \dot V=\tilde e_\setS^\T A_\setS(t) \tilde e_\setS+\tilde e_\setS^\T b_\setS=\tilde e_\setS^\T A_\setS^\mathrm{sym}(t) \tilde e_\setS+\tilde e_\setS^\T b_\setS(t)
\end{equation}
From the maximum principle for symmetric matrices, $\tilde e_\setS^\T A_\setS^\mathrm{sym}(t) \tilde e_\setS\le\lambda_{\max}(A_\setS^{\mathrm{sym}}(t))\tilde e_\setS^\T \tilde e_\setS$. Moreover, $\|\tilde e_\setS^\T b_\setS(t)\|\le\|\tilde e_\setS\| \| b_\setS(t)\|$. We then have
\begin{equation}
    \dot V\le-d_1(t) V+d_2\sqrt{V}\|b_\setS(t)\|,
\end{equation}
where $d_1(t)=-2\lambda_{\max}(A_{\mathcal S}^{\mathrm{sym}}(t))$, and $d_2=\sqrt{2}$. Note that, from H1, there exists a positive scalar $\phi$ such that $d_1(t)\ge \phi$, and that from H2 and H3 $\lim\limits_{t\to+\infty}(\|b_\setS(t)\|)=0$.  Similar steps to those reported after equation \eqref{eq:V_comparison} in the proof of Theorem \ref{thm:slowly_varying} then yield the thesis.
\end{proof}

\begin{remark}
    To ensure that it is possible to reconstruct the state of the network starting from an arbitrary initial condition $x_i(0) \in \mathcal W, \, i = 1,\ldots,N$, the hypotheses of theorems \ref{thm:slowly_varying} and \ref{thm:lmi} must be verified for all the trajectories rooted in $\mathcal W^N$. 
    
\end{remark}

\section{Design of the observer}\label{sec:obsv}

Here, we show how Theorems \ref{thm:slowly_varying} and \ref{thm:lmi} can be used to select the nodes that need to be measured to design the state observer that reconstructs the full state of the entire network. The critical network nodes that enable a complete observation are found through an algorithm that recursively explores the hypergraph topology.

As discussed in Section \ref{sec:hyperdiff}, the hyperdiffusive interactions over a directed hypergraph can be locally described by pairwise interactions over the associated signed graph. This implies that, locally, the dependencies between the node dynamics can be described by the signed graph $\mathcal G(\mathscr H)$ associated to $\mathscr H$. Accordingly, our algorithm will directly explore $\mathcal G(\mathscr H)$, leveraging well defined concepts on (signed) graphs, such as the graph condensation and source nodes \cite{bullo2024lectures}.

Denoting $\mathcal V_o\subseteq \mathcal V$ the set of nodes that we can successfully observe, that is, such that 
\[
\lim_{t\rightarrow+\infty}\tilde e_i(t)=0, \quad i\in\mathcal V_o,
\]
our algorithm aims to iteratively add nodes to $\mathcal O$ until $\mathcal V_o=\mathcal V$. Algorithm \ref{alg:main}, whose pseudo-code is reported for clarity below, is initialized by setting $\mathcal V_o=\emptyset$. Defining $\mathcal V_r=\mathcal V\setminus\mathcal V_o$, we construct the subgraph $\mathcal G_{\mathcal V_r}$ to then build the condensation graph $\mathcal C(\mathcal G_{\mathcal V_r})$. We identify the root strongly connected components (RSCC) of $\mathcal G_{\mathcal V_r}$, which correspond to the source nodes in $\mathcal C$. By construction, 
taking $\mathcal S$ as the nodes in the RSCCs, the set $\mathcal T(\mathcal E^{\star,{\mathcal S}}\setminus \mathcal E_{\mathcal S})\subseteq \mathcal V_o$, and therefore hypothesis H3 of Theorems \ref{thm:slowly_varying} and \ref{thm:lmi} is fulfilled. For each RSCC, we check whether the remaining hypothesis of Theorems \ref{thm:slowly_varying} and \ref{thm:lmi} are fulfilled for all $x_i(t)$ such that $x_i(0)\in\mathcal W$, $i\in\mathcal S$. If the hypotheses of any of the two theorems are fulfilled, we add the nodes in $\mathcal S$ to $\mathcal V_o$, otherwise we add a node $i\in\mathcal S$ to $\mathcal O$ following heuristic Algorithm \ref{alg:selection}, and repeat the procedure until the hypotheses of any of the two theorems are fulfilled, or $\mathcal S\subseteq\mathcal O$. If the hypotheses are not fulfilled even when $\mathcal S=\mathcal O$, the algorithm fails to guarantee observability of the full network state. The construction of $\mathcal{G}_{\mathcal{V}_{\mathrm{r}}}$ is repeated until $\mathcal V_r\neq \emptyset$.

Notice that fulfilling the hypothesis of Theorems \ref{thm:slowly_varying} and \ref{thm:lmi} depends on the selection of matrices $L_{ij}$ for  $i\in\mathcal S$, $j\in\mathcal O$. For Theorem \ref{thm:slowly_varying}, we choose $L_{ij}$ such that $\mathrm{spec}(A_\setS(t))=\Lambda^\star$, where $\Lambda^\star$ is selected so that the dynamics associated to the dominant eigenvalue is faster than the network dynamics. This ensures that H1 is fulfilled. Instead, for Theorem \ref{thm:lmi}, if possible, $L_{ij}$ is chosen as a feasible solution of the linear matrix inequality $A_\setS^{\mathrm{sym}} < \varepsilon I_{nM}$, where $\varepsilon$ is chosen so that the dynamics of the observer is faster than the dynamics of the observed network system.

\begin{algorithm}
\caption{Observer design algorithm}
    \label{alg:main}
\begin{algorithmic}
\STATE 
\STATE  \textbf{Input} $\mathscr{H}$, $D_x f$, $D_x g$, $D_x h$
\STATE \textbf{Output} $\mathcal O$

\STATE
\STATE \textbf{Initialization} 
\STATE $\mathcal V_o= \emptyset$; 
\STATE $\mathcal{V}_r= \mathcal{V}$;
\STATE Compute the signed graph $\mathcal G(\mathscr H)$ associated to $\mathscr H$;

\STATE
\STATE \textbf{Main loop}
\STATE
\textbf{while} $\mathcal V_r\ne \emptyset$:
\STATE \hspace{0.5cm} Extract the subgraph $\mathcal G_{\mathcal V_r}$; 
\STATE \hspace{0.5cm} Compute the condensation $\mathcal C(\mathcal G_{\mathcal V_r})$;
\STATE \hspace{0.5cm} Using $\mathcal C(\mathcal G_{\mathcal V_r})$, find the RSCCs $\mathcal S$ of $G_{\mathcal V_r}$;
\STATE \hspace{0.5cm} \textbf{for} all $\mathcal S$:
\STATE \hspace{0.5cm} \textbf{do}:
\STATE \hspace{1cm} Design the observer to fulfill the hypotheses of
\STATE \hspace{1cm} Theorems \ref{thm:slowly_varying} and \ref{thm:lmi} using $D_xf$ and $D_xg$;
\STATE \hspace{1cm} \textbf{if} the hypotheses of any of the two theorems are
\STATE \hspace{1.34cm} fulfilled:
\STATE \hspace{1.5cm} Add $\mathcal S$ to $\mathcal V_o$;
\STATE \hspace{1cm} \textbf{else}:
\STATE \hspace{1.5cm} Add a node in $\mathcal S$ to $\mathcal O$ according to
\STATE \hspace{1.5cm} Algorithm \ref{alg:selection};
\STATE \hspace{0.5cm} \textbf{while} the hypotheses of Theorems \ref{thm:slowly_varying} or \ref{thm:lmi} are not 
\STATE \hspace{1.52cm}fulfilled;   
\end{algorithmic}
\end{algorithm}

\begin{algorithm}
\caption{Selection of the node in $\mathcal S$}
    \label{alg:selection}
\begin{algorithmic}
\STATE 
\STATE  \textbf{Input} $\mathcal{S}$, $\mathscr H$, $\mathcal G_{\mathcal S}$
\STATE \textbf{Output} $j\in\mathcal S$

\STATE \textbf{for} all $i\in\mathcal S$
\STATE \hspace{0.5cm} Compute $d_i^{\mathrm{in}}$, $d_i^{\mathrm{out}}$ considering only the hyperedges 
\STATE \hspace{0.5cm} in $\mathcal E_{\mathcal S}$;
\STATE \hspace{0.5cm} Find $\mathcal R={\arg\max}_{i} d_i^{\mathrm{out}}$;
\STATE \hspace{0.5cm} \textbf{if} $|\mathcal R|=1$:
\STATE \hspace{1cm} $j=\mathcal R$;
\STATE \hspace{0.5cm} \textbf{else}:
\STATE \hspace{1cm} $j$ is the first element of ${\arg \max}_i d_i^{\mathrm{in}}$   
\end{algorithmic}
\end{algorithm}

\begin{remark}
When the output function $h$ is invertible, we use a simplified version of the observer design algorithm. Specifically, we can exactly reconstruct the state of a measured node, say $i$, from the initial time as $x_i(0)=h^{-1}(y_i(0))$. Therefore, in Algorithm \ref{alg:main} we can directly add node $i$ to $\mathcal V_o$ without the need of checking the hypotheses of Theorems \ref{thm:slowly_varying} or \ref{thm:lmi}.
\end{remark}

\begin{remark}
During the execution of the main loop of Algorithm \ref{alg:main}, the hypotheses of Theorems \ref{thm:slowly_varying} and \ref{thm:lmi} are checked on a representative set of trajectories rooted in $\mathcal{W}$ to reduce the computational effort.
\end{remark}

\section{Observing under parametric and measurement uncertainties}

Here, we consider the case where the individual dynamics have parametric mismatches. Namely, we consider that the vector field of node $i$ is $f(x_i,\mu_i)$, with $\mu_i=\hat \mu+\delta\mu_i$, where $\hat \mu$ is the nominal value of the parameters, and $\delta\mu_i$ is the deviation of the parameters of node $i$ from $\hat \mu$. We assume this deviation to be bounded, that is, $\|\delta\mu_i\|\le \bar \mu$. Network dynamics \eqref{eq:network_dynamics} then becomes
\begin{equation}
\label{eq:network_dynamics_het}
\dot{x}_i= f(x_i,\mu_i)+\sum\limits_{\epsilon \in \mathcal{E}^{\star,i}} \sigma_{\epsilon} g(x_\epsilon^{\tau} \alpha_\epsilon-x_\epsilon^{h} \beta_\epsilon),\quad i=1,\ldots,N,
\end{equation}
Additionally, we consider the presence of bounded measurement noise, whereby equation \eqref{eq:output} becomes
\begin{equation}\label{eq:output_noise}
    y_i=h(x_i)+\nu_i(t),\qquad i\in\mathcal O,
\end{equation}
where $\|\nu_i(t)\|\le \bar\nu$.

In this setting, we adjust the observer structure by setting its parameter vector to $\hat \mu$, that is,
\begin{equation}
\label{eq:observer_uncertain}
\begin{aligned}
\dot{\hat x}_i&=f(\hat x_i,\hat\mu)+\sum_{j\in\mathcal O}L_{ij}( y_j- h (\hat x_j))\\
&\hspace{12pt}+\sum\limits_{\epsilon \in \mathcal{E}^{\star,i}}\sigma_\epsilon g(\hat x_\epsilon^{\tau}\alpha_\epsilon-\hat x_\epsilon^{h} \beta_\epsilon).
\end{aligned}
\end{equation}
Following the same line of argument as in Section \ref{sec:obsv}, we can write the linearized error dynamics as
\begin{align}\label{eq:final_form_noise}
    \dot {\tilde e}_i&=D_xf(\hat x_i,\hat\mu) \tilde e_i+D_\mu f(\hat x_i,\hat\mu)\delta\mu_i-\sum_{j\in\mathcal O}L_{ij}D_xh(\hat x_j)\tilde e_j\nonumber \\
    &-\sum_{j\in\mathcal O}L_{ij}\nu_j
    +\sum\limits_{\epsilon \in \mathcal{E}^{\star,i}}\sigma_\epsilon D_xg(\hat x_\epsilon^{\tau} \alpha_\epsilon-\hat x_\epsilon^{h} \beta_\epsilon)(\tilde e_\epsilon^{\tau} \alpha_\epsilon-\tilde e_\epsilon^{h} \beta_\epsilon).
\end{align}
Given a subset of nodes $\mathcal S=\{s_1,\ldots,s_M\}\subseteq\mathcal  V$, the error dynamics of the subset can be written as
\begin{equation}
    \dot {\tilde e}_{\mathcal S}=A_\mathcal{S}(t)\tilde e_\mathcal S + b_\mathcal S+b^\mu_\mathcal S+b^\nu_\mathcal S,
\end{equation}
where $b^\mu_\mathcal S=[D_\mu f(\hat x_{s_1},\hat \mu)\delta\mu_{s_1};\ldots;D_\mu f(\hat x_{s_M},\hat \mu)\delta\mu_{s_M}]$, and $b^\nu_\mathcal S=[\sum_{j\in\mathcal O}L_{s_1 j}\nu_j;\ldots;\sum_{j\in\mathcal O}L_{s_M j}\nu_j]$.

Furthermore, we define
         \begin{equation}
        L_\setS=\begin{bmatrix}
            (L_\setS)_{11} & \ldots & (L_\setS)_{1M}\\
            \vdots & \ddots & \vdots \\
            (L_\setS)_{M1} & \ldots & (L_\setS)_{MM},
        \end{bmatrix}.
    \end{equation}   
    with $(L_\setS)_{ij}\in\mathbb R^{n\times p}$ being
    \begin{equation}
        (L_{\mathcal S})_{ij}=\left\{
        \begin{aligned}
            &L_{s_i s_j},\quad  s_j\in\mathcal O,\\
            &0_{n\times p}, \hspace{11pt} \text{otherwise.}
        \end{aligned}\right.
    \end{equation}

Next, we aim to guarantee a locally bounded observation error, that is, there exists a $b\ge 0$ such that
\begin{enumerate}
    \item for each $\xi> b$, there exist scalars $\Delta_1^x(\xi)$ and $\Delta_1^\mu(\xi)$ such that
    \begin{equation}
        \|e(t)\|\le \xi, \quad \forall t\ge 0, 
    \end{equation}
    for all $\|e(0)\|\le \Delta_1^x(\xi)$, $\|\delta \mu\|\le \Delta_1^\mu(\xi)$; and
    \item there exist scalars $\Delta_2^x(\xi)$ and $\Delta_2^\mu(\xi)$ such that
    \begin{equation}
    \limsup_{t\rightarrow+\infty}\|e(t)\|\le b, 
    \end{equation}
    for all $\|e(0)\|\le \Delta_2^x(\xi)$, $\|\delta \mu\|\le \Delta_2^\mu(\xi)$.
\end{enumerate}

We can now characterize the robustness of our observer using the following theorem:
\begin{theorem}\label{thm:rob}
    Given a subset of nodes $\mathcal S=\{s_1,\ldots,s_M\}\subseteq\mathcal  V$. If
    \begin{enumerate}
        \item [H1)] $A_\setS^{\mathrm{sym}}(t)$ has a uniform Hurwitz margin $\mathfrak h$ for all $t$;
        \item [H2)] $\|D_x g(t)\|\le\mathfrak g$, and $\|D_\mu f(t)\|\le \mathfrak f$ for all $t$;
        \item [H3)] $\limsup\limits_{t\to+\infty}\|\tilde e_j(t)\|\le B_j,$ for all $j\in\mathcal{T}(\mathcal{E}^{\star,S}\setminus\mathcal{E}_\setS)$;
        \item [H4)] $\|\delta\mu_i\|\le \bar \mu$ for all $i\in\mathcal V$, and $\|\nu_i(t)\|\le \bar \nu$ for all $i\in\mathcal O$, $t\ge 0$;
        \item [H5)] $L_{ij}=0$ for all $i\in\mathcal S$, $j\notin \mathcal S$.
    \end{enumerate}
    Then $e_{\mathcal S}(t)$ is locally bounded, and
    $$
            \limsup\limits_{t\to+\infty}\|\tilde e_{\mathcal S}(t)\|\le B,
    $$
where
\begin{equation}
    B=\frac{\sqrt{2}\sum_{j=1}^M \bar b_{s_j}+\sqrt{2 M}\bar\mu\mathfrak f+\sqrt{2 M}\bar\nu\mathfrak l}
    {2\mathfrak h},
\end{equation}
with $\mathfrak l=\limsup_{t\to+\infty}\|L_\setS(t)\|$, and
\[
\bar b_{s_j}=\mathfrak g \sum\limits_{\epsilon\in\mathcal{E}^{*,s_{i}}}\sigma_\epsilon \sum_{\tau(\epsilon,j)\notin \mathcal S} B_j.
\]
    \end{theorem}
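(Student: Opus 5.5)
The plan follows the proof of Theorem \ref{thm:lmi}, using the quadratic Lyapunov function $V=\frac12\tilde e_\setS^\T\tilde e_\setS$ on the perturbed subset dynamics
\[
\dot{\tilde e}_\setS=A_\setS(t)\tilde e_\setS+b_\setS+b^\mu_\setS+b^\nu_\setS .
\]
Differentiating gives
\[
\dot V=\tilde e_\setS^\T A^{\mathrm{sym}}_\setS(t)\tilde e_\setS+\tilde e_\setS^\T(b_\setS+b^\mu_\setS+b^\nu_\setS).
\]
By H1 and the maximum principle, the quadratic term is bounded by $-\mathfrak h\|\tilde e_\setS\|^2=-2\mathfrak h V$. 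By Cauchy--Schwarz, the remaining terms are bounded by $\sqrt{2V}\,(\|b_\setS\|+\|b^\mu_\setS\|+\|b^\nu_\setS\|)$. Setting $z=\sqrt V$ as in the proof of Theorem \ref{thm:slowly_varying}, this becomes the scalar comparison inequality
\[
\dot z\le-\mathfrak h z+\tfrac{\sqrt2}{2}\big(\|b_\setS\|+\|b^\mu_\setS\|+\|b^\nu_\setS\|\big).
\]

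Next I will bound each forcing term asymptotically.

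\emph{Interaction term.} By H2, each block satisfies $\|b_{s_i}\|\le\mathfrak g\sum_{\epsilon}\sigma_\epsilon\sum_{\tau(\epsilon,j)\notin\setS}|(\alpha_\epsilon)_j|\,\|\tilde e_{\tau(\epsilon,j)}\|$. Since $\alpha_\epsilon$ is a vector of convex weights, $|(\alpha_\epsilon)_j|\le1$. H3 then gives $\limsup_{t\to+\infty}\|b_{s_i}\|\le\bar b_{s_i}$. Using $\|b_\setS\|\le\sum_i\|b_{s_i}\|$ yields $\limsup\|b_\setS\|\le\sum_j\bar b_{s_j}$.

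\emph{Parametric term.} Each block has norm at most $\mathfrak f\bar\mu$ by H2 and H4. Stacking $M$ blocks gives $\|b^\mu_\setS\|\le\sqrt M\,\mathfrak f\bar\mu$.

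\emph{Noise term.} H5 guarantees that only gains $L_{s_is_j}$ with $s_j\in\setS\cap\mathcal O$ appear. Hence $b^\nu_\setS=L_\setS\nu_\setS$, where $\nu_\setS$ stacks the noise of the measured nodes in $\setS$ and satisfies $\|\nu_\setS\|\le\sqrt M\bar\nu$. This gives $\limsup\|b^\nu_\setS\|\le\sqrt M\bar\nu\mathfrak l$.

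Applying the comparison lemma (Grönwall for the linear scalar system) gives $z(t)\le e^{-\mathfrak h t}z(0)+\frac{\sqrt2}{2}\int_0^t e^{-\mathfrak h(t-s)}\beta(s)\,ds$, where $\beta$ denotes the sum of the forcing norms. A standard $\limsup$ argument, splitting the integral at a large time $T$ after which $\beta\le\limsup\beta+\eta$, gives $\limsup z\le\frac{\sqrt2}{2\mathfrak h}\limsup\beta$. Converting back with $\|\tilde e_\setS\|=\sqrt2\,z$ produces a bound of the form of $B$. The constants in $B$ as stated, with factors $\sqrt2$ and $\sqrt{2M}$ over $2\mathfrak h$, should follow from this bookkeeping. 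I expect I may need to adjust the $\sqrt2$ accounting, for example by bounding with $\dot V\le-\mathfrak h\|\tilde e\|^2+\|\tilde e\|\beta$ directly on $\|\tilde e\|$.

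Local boundedness follows from the same variation-of-constants estimate. The bound $z(t)\le z(0)+\frac{\sqrt2}{2\mathfrak h}\sup\beta$ holds for all $t$. The forcing terms are small whenever $\|e(0)\|$, $\bar\mu$ and $\bar\nu$ are small. We then choose $\Delta^x_1,\Delta^\mu_1$ so that the trajectory stays in the neighbourhood where the linearization, and hence H1--H2, remain valid.

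The main obstacle is this local-validity step: the linearized error must faithfully represent the nonlinear error. For this I would rely on the same local (first-order) reasoning implicitly used in Theorems \ref{thm:slowly_varying}--\ref{thm:lmi}, rather than a full nonlinear remainder analysis. The $\limsup$ passage with only asymptotic bounds from H3 also needs care, but it is routine.
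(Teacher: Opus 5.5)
Your route is the paper's route: the same Lyapunov function $V=\frac12\tilde e_{\mathcal S}^\T\tilde e_{\mathcal S}$, the maximum principle for $A_{\mathcal S}^{\mathrm{sym}}$, Cauchy--Schwarz on the three forcing terms, the comparison in $z=\sqrt V$, and termwise $\limsup$ bounds. You also make explicit the role of H5 in writing $b^\nu_{\mathcal S}=L_{\mathcal S}\nu_{\mathcal S}$, which the paper uses silently.

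\textbf{The constant.} What fails is your claim that the stated constant ``should follow from this bookkeeping''. Set $\bar\beta=\sum_{j}\bar b_{s_j}+\sqrt M\bar\mu\mathfrak f+\sqrt M\bar\nu\mathfrak l$. Your computation is correct and gives $\limsup z\le\frac{\sqrt2}{2\mathfrak h}\bar\beta$, hence $\limsup\|\tilde e_{\mathcal S}\|\le\bar\beta/\mathfrak h=\sqrt2\,B$. No re-accounting will reduce this to $B$, because $B$ is not a valid bound. Take $\mathcal S=\mathcal V=\{1\}$, $n=1$, no hyperedges, $\mathcal O=\emptyset$, $f(x,\mu)=-\mathfrak h x+\mu$ and $\delta\mu_1=\bar\mu>0$. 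Then H1--H5 hold with $\mathfrak f=1$ and $\mathfrak l=0$. The error obeys $\dot{\tilde e}=-\mathfrak h\tilde e+\bar\mu$, so $\tilde e(t)\to\bar\mu/\mathfrak h>\bar\mu/(\sqrt2\,\mathfrak h)=B$.

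The paper reaches $B$ as follows. It derives $\dot V\le-d_1V+d_2\sqrt V$ with $\liminf d_1\ge2\mathfrak h$ and $\limsup d_2\le\sqrt2\,\bar\beta$, and the comparison gives $\limsup\sqrt V\le\sqrt2\,\bar\beta/(2\mathfrak h)=B$. That bound on $\sqrt V$ is then reported as a bound on $\|\tilde e_{\mathcal S}\|=\sqrt{2V}$, which loses exactly the factor $\sqrt2$ you were worried about. You should state the conclusion as $\limsup\|\tilde e_{\mathcal S}\|\le\sqrt2\,B$ (equivalently $\limsup\sqrt V\le B$) and flag the discrepancy, rather than try to match the statement.

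\textbf{Local boundedness.} The paper only says ``which implies local boundedness''. Your variation-of-constants sketch goes further, but one step needs more than the hypotheses give. H3 only controls $\limsup_{t\to+\infty}\|\tilde e_j\|$ for the upstream tails. It therefore does not make $\sup_{t\ge0}\|b_{\mathcal S}(t)\|$ small when $\|e(0)\|$ is small. To make ``$\sup\beta$ is small'' rigorous you need a uniform-in-time bound on the upstream errors as well. One way is to apply the result inductively along the condensation order in which Algorithm~\ref{alg:main} visits the nodes, starting from the root components, which have no upstream forcing. Like the paper, you still rely on the linearized error being a faithful local model rather than handling the nonlinear remainder.
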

\begin{proof}
    Consider the Lyapunov function candidate:
    \begin{equation}
        V = \frac{1}{2}\tilde e_\setS^\T \tilde e_\setS,
    \end{equation}
    By computing its time derivative, one obtains
    \begin{equation}\label{eq:dotv_th3}
    \begin{aligned}
        \dot V&=\tilde e_\setS^\T A_\setS(t) \tilde e_\setS+\tilde e_\setS^\T b_\setS+\tilde e_\setS^\T b_\setS^\mu+\tilde e_\setS^\T b_\setS^\nu\\&=\tilde e_\setS^\T A_\setS(t)^{\mathrm{sym}} \tilde e_\setS+\tilde e_\setS^\T b_\setS+\tilde e_\setS^\T b_\setS^\mu+\tilde e_\setS^\T b_\setS^\nu
    \end{aligned}
    \end{equation}
    From the maximum principle for symmetric matrices, we know that 
    \begin{equation}\label{eq:max_pr}
    \tilde e_\setS^\T A_\setS^\mathrm{sym}(t) \tilde e_\setS\le\lambda_{\max}(A_\setS^{\mathrm{sym}}(t))\tilde e_\setS^\T \tilde e_\setS.
    \end{equation} 
    Additionally, 
    \begin{align}
    \|\tilde e_\setS^\T b_\setS\|&\le\|\tilde e_\setS\| \| b_\setS\|,\\
        \|\tilde e_\setS^\T b_\setS^\mu\|\le\|\tilde e_\setS\| \| b_\setS^\mu\|&\leq \|\tilde e_\setS\|\|D_\mu f(t)\|\bar\mu\sqrt{M},\\
        \|\tilde e_\setS^\T b_\setS^\nu\|\le\|\tilde e_\setS\| \| b_\setS^\nu\|&\leq \|\tilde e_\setS\| \|L_\setS(t)\|\bar\nu\sqrt{M},\label{eq:ineq_th3}.
    \end{align} 

Using the bounds in \eqref{eq:max_pr}-\eqref{eq:ineq_th3}, from \eqref{eq:dotv_th3} we have
    \begin{equation}
        \dot V\leq -d_1(t)V+d_2(t)\sqrt{V},
    \end{equation}
    where $d_1(t)=-2\lambda_{\max}(A_\setS^{\mathrm{sym}}(t))$ and
 $d_2(t)=\sqrt{2}\|b_\setS\|+\sqrt{2 M}\bar\mu\|D_\mu f(t)\|+\sqrt{2 M}\bar\nu\|L_\setS(t)\|$.
From H1,
\begin{equation}
    \liminf_{t\to+\infty} d_1(t)\ge 2\mathfrak{h},
\end{equation}
whereas from H2, H3, and H4, and from \eqref{eq:b_term},
\begin{equation}
    \limsup_{t\to+\infty} d_2(t)\le \sqrt{2}\sum_{j=1}^M \bar b_{s_j}+\sqrt{2 M}\bar\mu\mathfrak f+\sqrt{2 M}\bar\nu\mathfrak l.    
\end{equation}

Following similar steps reported in Theorem  \ref{thm:lmi}, we obtain that 
 $$
 \limsup\limits_{t\to+\infty}\|\tilde e_{\mathcal S}(t)\|\le B,
 $$
which implies local boundedness of $e_{\mathcal S}$.
\end{proof}
Notice that Theorem \ref{thm:rob} implies that in the presence of bounded measurement noise and parameter mismatches, the estimation error of the entire network remains bounded, since we can also pick the set $\mathcal S=\mathcal V$.

\begin{figure}
  \centering
  \begin{tabular}{cc}
  \hspace{-5mm}
\includegraphics[scale=0.20]{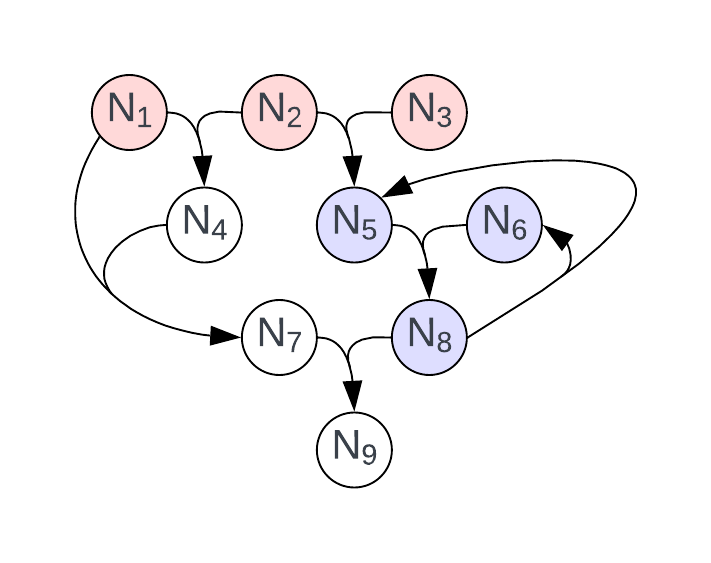}& \hspace{-7mm}\includegraphics[scale=0.20]{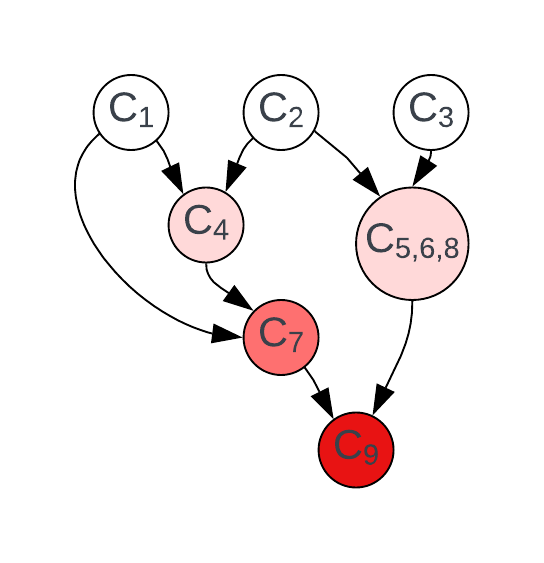}\\
    \hspace{-4mm}$\mathbf a$ & \hspace{-11mm}$\mathbf b$\\
    \hspace{-8mm}\includegraphics[scale=0.50]{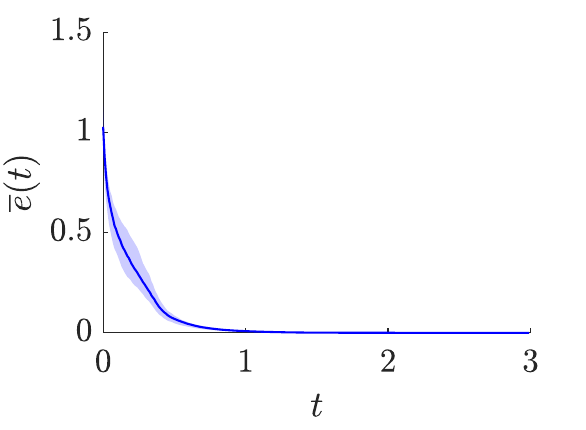} & \hspace{-10mm}\includegraphics[scale=0.50]{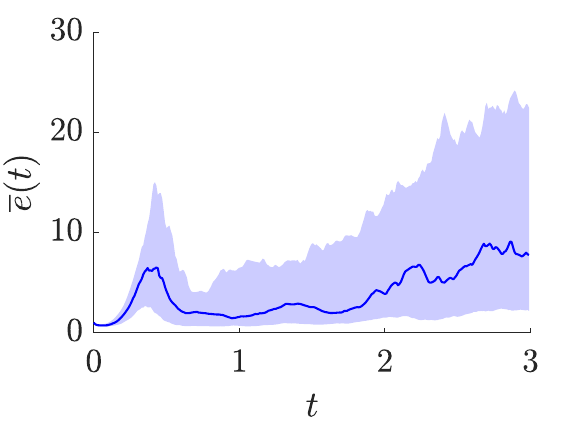} \\
    \hspace{-4mm}$\mathbf c$ & \hspace{-7mm}$\mathbf d$
 
  \end{tabular}
  \caption{Observation of a 9-node network coupled through a directed hypergraph. Panel (a) depicts the hypergraph topology: the nodes that are selected to be measured by Algorithm \ref{alg:main} are colored, with the red ones being the source nodes. Panel (b) reports the condensation of the signed graph associated to the hypergraph in panel (a) to illustrate how Algorithm \ref{alg:main} sequentially builds $\mathcal V_0$: different colors correspond to different iterations of the main loop of Algorithm \ref{alg:main}, with darker shades corresponding the later iterations. Panel (c) and (d) report the median norm $\overline{e}$ of the estimation error, computed over 100 simulations starting from initial conditions for $\hat x_i \sim \mathcal U ([0.8 x_i(0),\,1.2 x_i(0)]) $, $i=1,\ldots,N$, when all the colored nodes of panel (a) and only the sources are measured, respectively; the shaded areas correspond to the interval between the 25th and 75th percentiles of the error norm distributions.}
  \label{fig:toy_topology}
\end{figure}

\section{Numerical validation}\label{sec:validation}

To validate the effectiveness of our results, we start by illustrating the observer design algorithm on a small ($N=9$) network. We consider nodes as coupled Lorenz systems, whose individual dynamics are given by 
\begin{equation}\label{eq:lorenz}
f(x_i)=
\begin{bmatrix}
\mu_{1}(x_{i2}-x_{i1})\\ x_{i1}(\mu_{2}-x_{i3})-x_{i2}\\ x_{i1}x_{i2}-\mu_{3}x_{i3})
\end{bmatrix},
\end{equation}
where $\mu_1=10$, $\mu_2=28$, and $\mu_3=8/3$. The interactions take place through the nonlinear protocol 
\begin{equation}\label{eq:coup_prot_ex}
g(z)=
\begin{bmatrix}
\tilde g(z_{1})&
\tilde g(z_{2})&
\tilde g(z_{3})
\end{bmatrix}^\T,
\end{equation}
where $\tilde g(\zeta)=0.2\zeta+0.05((\zeta+2)\tanh(\zeta+2)-(\zeta-2)\tanh(\zeta-2))$. The coupling strength has been chosen to be homogeneous across the hyperedges, that is, $\sigma_\epsilon=80$ for all $\epsilon\in\mathcal E$, and the network topology $\mathscr H$ is illustrated in Figure \ref{fig:toy_topology}a. Algorithm \ref{alg:main} was used on 100 representative trajectories with initial conditions randomly sampled in the Lorenz attractor. Each trajectory lasted $2$ time units.

A graphical representation of the iterative exploration procedure, performed on the condensation of the signed graph associated to $\mathscr{H}$ is reported in Figure \ref{fig:toy_topology}b. Starting from the root strongly connected components, Algorithm \ref{alg:main} iteratively adds nodes to $\mathcal{V}_o$ by measuring only the output of the nodes that are necessary to fulfill the hypotheses of either Theorem \ref{thm:slowly_varying} or \ref{thm:lmi}.
This algorithmic exploration revealed that measuring the output nodes in  $\mathcal{O}=\{1,2,3,5,6,8\}$ is sufficient to reconstruct the state of the entire network. 

To validate the observer designed using Algorithm \ref{alg:main}, we simulated the network together with its observer 100 times, starting from different initial conditions in $\mathcal{W}=[-3,3]$. Specifically, $x_i(0) \sim \mathcal{U}(\mathcal W)$ and $\hat x_i(0) \sim x_i(0)(1+\mathcal{U}([-0.2,0.2]))$ for all $i\in [1,N]$, with $\mathcal U(z)$ denoting the uniform distribution in the interval $z$.
The norm of the observation error, reported in Fig.\ref{fig:toy_topology}c,  consistently decreases over time and in all the simulations converges to zero with a $5\%$ settling time of at most $0.72$ time units. 
We compare the observation results of Algorithm \ref{alg:main} with a simple heuristic selection of the nodes to be measured, which measures the output only of the RSCCs, which must always be measured to observe the network. However, this condition is not sufficient as illustrated in Figure \ref{fig:toy_topology}d. Here, the observation error never settles and does not approach zero.  

To further validate the proposed algorithm, we next focused on the case of non-invertible output function, and tested the robustness to i) large initial observation error, ii) parametric uncertainties, and iii) measurement errors. Specifically, we considered a larger ($N=20$) group of Lorenz systems coupled through a hierarchical hypergraph (see Appendix \ref{sec:hierarchical_H} for details about the generation of hierarchical hypergraphs). The same coupling protocol $g$ is the same as in \eqref{eq:coup_prot_ex}, $\sigma_\epsilon=\sigma=100$ for all $\epsilon\in\mathcal{E}$. Differently from the smaller scale example above, we select a non-invertible output function, that is, $h(x)=\Gamma x$, with 
\[
\Gamma=\begin{bmatrix}
    0 & 1 & 0\\0 & 0 & 1
\end{bmatrix}.
\]
This implies that the first node state variable is not accessible. Algorithm \ref{alg:main} was used on 100 representative trajectories with initial conditions randomly sampled in the Lorenz attractor. Each trajectory lasted $2$ time units.

The application of our algorithm reveals that by measuring 12 nodes (6 of which are source nodes) we are able to fully reconstruct the network state. This means that the algorithm identifies 6 non-trivial nodes that are instrumental in fully observing the network. In our numerical investigation, we performed 100 simulations starting from $x_0\sim\mathcal U(\mathcal W)$, $\hat x_0\sim x_i(0)(1+\mathcal{U}([-0.2,0.2]))$, and in all instances the observation error norm converges to zero with a $5\%$ settling time of at most  $0.63$ time units, see Figure \ref{fig:ex2}a.

Given that our approach is local, we tested the effectiveness of the algorithm for large uncertainties on the initial conditions of the network, that is, when the initial observation error is large. In particular, we have performed 6 batches of 20 simulations, where in the $\mathfrak b$-th batch the initial conditions for the observer are selected as $\hat x_i(0)\sim x_i(0)(1+\mathcal{U}([-0.5\mathfrak b,0.5 \mathfrak b]))$, $\mathfrak b=1,\ldots,6$. As illustrated in Figure \ref{fig:ex2}b, the solution proposed by Algorithm \ref{alg:main} is robust to large initial observation errors (up to $300\%$), whereby the estimated state asymptotically converges to the network state in all simulations.

In practical applications the actual parameters of the network may differ from the nominal ones. We test the robustness to parametric uncertainties by designing the observer only relying on the nominal value of the parameters, while the agents comprising the network have heterogeneous individual dynamics, with parameters differing from the nominal ones. In Figure \ref{fig:ex2}c, we observed that, in spite of the parametric mismatches, our algorithm is still capable of reconstructing the state with a bounded error, as predicted by our theoretical analysis. Finally, in panel (d) we showed how even in the presence of bounded measurement noise, the network observation error remains bounded, with the maximum error never exceeding 3. 

\begin{figure}
  \centering
  \begin{tabular}{cc}
  \hspace{-5mm}
\includegraphics[scale=0.50]{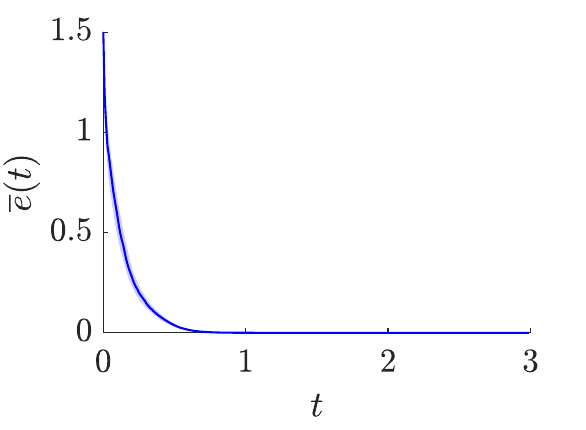}& \hspace{-7mm}\includegraphics[scale=0.50]{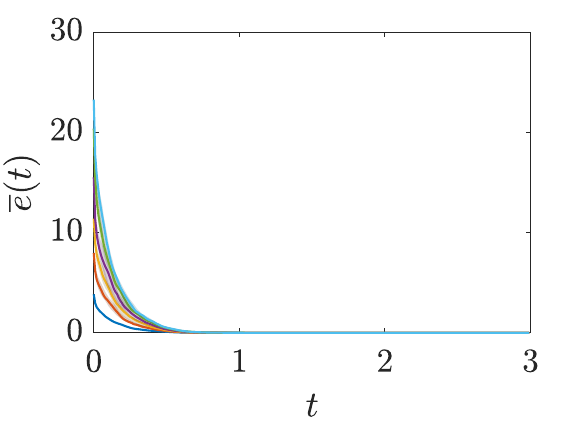}\\
    \hspace{0mm}$\mathbf a$ &\hspace{-4mm} $\mathbf b$\\
    \hspace{-5mm}\includegraphics[scale=0.50]{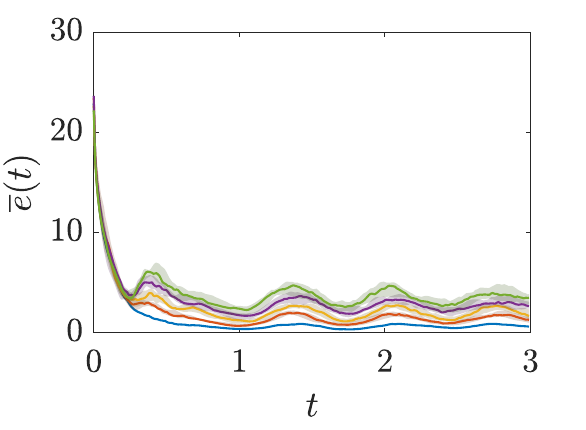} & \hspace{-7mm}\includegraphics[scale=0.50]{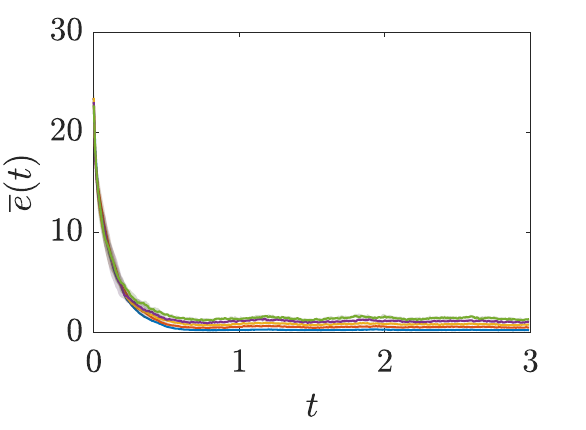} \\
    \hspace{-2mm}$\mathbf c$ & \hspace{-4mm}$\mathbf d$
 
  \end{tabular}
  \caption{Observation of a 20-node network coupled through a hierarchical directed hypergraph. In all panels, the solid lines depict the median norm $\overline{e}$ of the estimation error (computed over 100 simulations in panel (a), and 20 simulations in the other panels, whereas the shaded areas correspond to the interval between the 25th and 75th percentiles of the error norm distributions. Panel (a) considers initial conditions for $\hat x_i \sim \mathcal{U}([0.8 x_i(0),\,1.2 x_i(0)])$, $i=1,\ldots,N$, whereas in panel (b) each of the 6 solid lines (and corresponding shaded areas) correspond to different batches, with the $\mathfrak b$-th batch characterized by initial conditions for the observer $\hat x_i(0)\sim x_i(0)(1+\mathcal{U}([-0.5\mathfrak b,0.5\mathfrak b]))$, $\mathfrak b=1,\ldots,6$, each identified by a different color. Panel (c) depicts the scenario where the actual parameters of the network nodes deviate from the nominal one given in \eqref{eq:lorenz}: each of the 5 solid lines (and corresponding shaded areas) corresponds to a different batch $\mathfrak b$ characterized by $\mu_1\sim 10(1+\mathcal U([-0.02\mathfrak b,+0.02\mathfrak b])$, $\mu_2\sim 28(1+\mathcal U([-0.02\mathfrak b,+0.02\mathfrak b])$, and $\mu_3\sim 8(1+\mathcal U([-0.02\mathfrak b,+0.02\mathfrak b])/3$, $\mathfrak b=1,\ldots,5$, each identified by a different color. Panel (d) depicts the scenario where the output of the $i$-th node is affected by measurement noise $\nu(t)$; each of the 5 solid lines (and corresponding shaded area) corresponds to a different batch $\mathfrak b$ characterized by $\nu(t)\sim\mathcal U([-0.2\mathfrak b, 0.2\mathfrak b])$, $\mathfrak b=1,\ldots,5$, each identified by a different color.}
  \label{fig:ex2}
\end{figure}

\section{Application to opinion dynamics}\label{sec:opinion}

Here, we show how our estimation algorithm can be used to reconstruct the opinions of an ensemble of individuals by only measuring those of a suitably selected subgroup.
Specifically, here we consider the two-option model introduced in \cite{iudice2022bounded}, where the individual dynamics are a bistable system, namely
\[
f(x_i)=-\mu_1x_i+\mu_2\tanh{x_i},
\]
with $\mu_1=1$ and $\mu_2=2$. In the absence of interactions, node $i$ would converge to one of the two stable equilibria, that is, $x^-=-1.91$ and $x^+=1.91$.
 The model captures a binary choice between two alternatives (such as, e.g., an electoral runoff), in which the state of each individual can be interpreted as
its opinion. A positive or negative opinion corresponds to leaning towards one choice or the other. The larger the magnitude of an opinion, the more extreme it will be, since it will be further from 0, which represents the opinion of a neutral, undecided individual. As a coupling protocol, we consider the nonlinear function 
\[
g(z)=0.2z+0.3((z+2)\tanh(z+2)-(z-2)\tanh(z-2))
\]
so that individuals are more sensitive to opinions that are closer to their own.

Social stratification, that is, the hierarchical ranking of people into socioeconomic tiers based on factors like wealth, and education, and power, is a universal phenomenon that divides society into layers \cite{sorokin1927social}.
Such stratification has been observed to also permeate the structure of social networks, leading to hierarchical topological features \cite{jalali2023social}. In order to reproduce such a feature, we have generated a 100-node hierarchical directed hypergraph, and extracted the  largest connected component $\mathscr H_{\mathrm{cc}}$ of the hypergraph, which is composed by 90 nodes and is depicted in Figure \ref{fig:opinion_gerarchico}a. We used $\mathscr H_{\mathrm{cc}}$
to test the effectiveness of our observer in reconstructing the opinions of each individual.

Algorithm \ref{alg:main} was used on 100 representative trajectories with initial conditions sampled, for each node, from the uniform distribution $ \mathcal U([-2.5,2.5])$. Each trajectory lasted $10$ time units. The algorithm identifies 57 nodes whose output needs to be measured, 25 of which are sources. We considered 100 simulations, where we set $\sigma_\epsilon=\sigma=20$ for all $\epsilon \in \mathcal{E}$, which differ for the initial opinion of the individuals, extracted as $x_i(0)\sim \mathcal U([-0.5,0.5])$, and for the initial estimation, obtained as $\hat{x}_i(0)\sim x_i(0)(1+\mathcal U([-0.5,0.5]))$, for $i=1,\ldots,90$.
As illustrated in Figure \ref{fig:opinion_gerarchico}b, the observation error asymptotically vanishes, with a $5\%$ settling time of 3.96 time units. A representative simulation reporting the dynamics of the unmeasured nodes and their corresponding estimations are reported in Figure \ref{fig:opinion_gerarchico}c.

\begin{figure}
  \centering
  \begin{tabular}{c}
  \hspace{-5mm}
\includegraphics[scale=0.70]{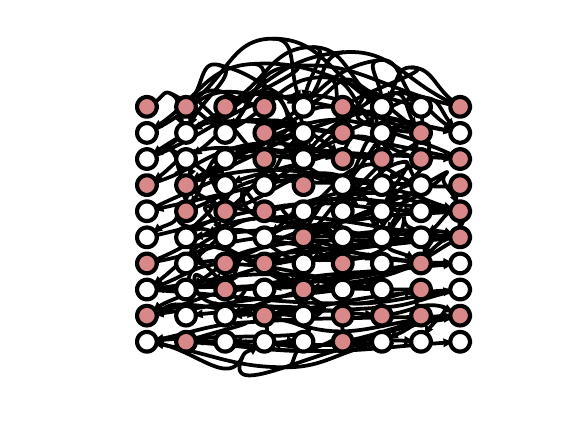}\\
    \hspace{0mm}$\mathbf a$ \\
    \begin{tabular}{cc}
     \hspace{-8mm}\includegraphics[scale=0.50]{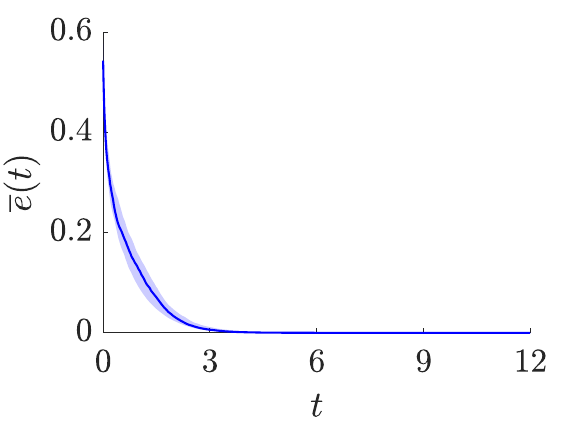} & \hspace{-7mm}\includegraphics[scale=0.50]{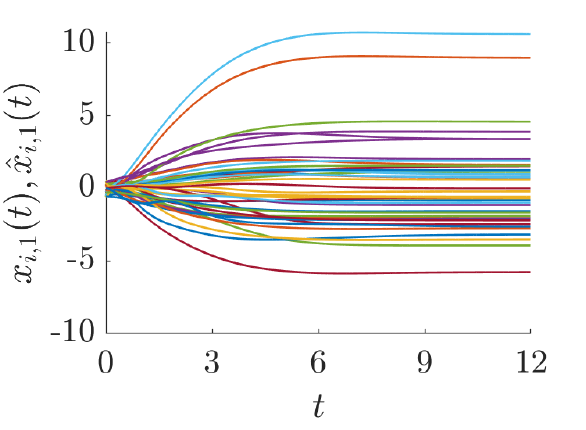} \\
    \hspace{-2mm}$\mathbf b$ & \hspace{-4mm}$\mathbf c$
    \end{tabular}
  \end{tabular}
  \caption{Opinion dynamics over the hierarchical directed hypergraph depicted in panel (a), where the output of the red nodes are measured. Panel (b) reports the median norm $\overline{e}$ of the estimation error, computed over 100 simulations starting from initial conditions for $\hat x_i \sim \mathcal U ([0.5 x_i(0),\,1.5 x_i(0)])$, $i=1,\ldots,N$,  when all the colored nodes of panel (a) are measured; the shaded areas correspond to the interval between the 25th and 75th percentiles of the error norm distributions. Panel (c) depicts the opinions of the nodes that are not measured (solid lines) along with the corresponding estimations (dashed lines) for a representative simulation.}
  \label{fig:opinion_gerarchico}
\end{figure}


\section{Conclusions}\label{sec:conclusion}

In this paper, we tackled the problem of reconstructing the state of a network of nonlinear dynamical systems in the presence of higher-order interactions modeled through directed hypergraphs. We derived analytical conditions that inform the design of an observer guaranteeing asymptotic reconstruction of the network state. Based on these theoretical results, we developed an algorithm that sequentially explores the network, identifying a set of nodes whose output is sufficient to reconstruct the full network state. Through extensive numerical simulations, we have shown that the proposed observer is able to guarantee a bounded reconstruction error even in the presence of parametric uncertainty and measurement error. Furthermore, we have presented a numerical application on opinion dynamics, whereby we have shown how the proposed algorithm is capable of reconstructing the opinion of an entire group of agents by directly accessing the opinion of a suitably selected subset of individuals.

The results presented in this manuscript are promising and pave the way for future investigations. First, the convergence analysis we performed is local; therefore, albeit the numerical analyses suggest that the observer is effective also for large initial estimation error, this is not analytically guaranteed. Alternative approaches may be sought to provide global convergence guarantees. Second, we focus on the relevant case of synchronization noninvasive coupling protocols \cite{gambuzza2021stability,della2023emergence,muolo2025pinning}, with a focus on hyperdiffusive coupling. As in some fields of application, such as ecological networks, the coupling is synchronization invasive, it would be interesting to extend the results to cope with more general coupling protocols.
Finally, we envision that the results presented in this work may also be applied in the context of master-slave synchronization \cite{gutierrez2012targeting}, whereby the observed network can be seen as the master network, and the observer as the slave.

\appendices


\section{Hierarchical directed hypergraphs generation algorithm} \label{sec:hierarchical_H}

Here, we describe the algorithm we used to generate hierarchical directed hypergraphs $\mathscr{H}=(\mathcal V, \mathcal E)$, where all the hyperedges $\epsilon\in\mathcal E$ have the same cardinality $|\epsilon|=\mathfrak c$. The algorithm adds hyperedges with either one tail and $\mathfrak c-1$ heads (denoted source hyperedges) or with one head and $\mathfrak c-1$ tails (denoted sink hyperedges).

The algorithm generates a hierarchical hypergraphs by constructing it in successive layers. Specifically, the nodes of the hypergraphs are partitioned in $\ell$ layers, that is, $\mathcal V=\{\mathcal V_1,\ldots,\mathcal V_\ell\}$. The nodes within the $i$th layer, with $i=1,\ldots,\ell$, are interconnected by $\mathrm{src}_\mathrm{intra}^i$ and $\mathrm{snk}_\mathrm{intra}^i$ source and sink hyperedges, respectively. The heads and tails of each hyperedge $\epsilon$ are uniformly randomly selected within $\mathcal V_i$.

In addition to the hyperedges within each layer, the nodes of the $i$th layer, with $i=1,\ldots,\ell-1$, are also connected with the nodes in the layer $i+1$. Specifically,  $\mathrm{src}_\mathrm{inter}^i$ and $\mathrm{snk}_\mathrm{inter}^i$ source and sink hyperedges are added, whose tails are uniformly randomly selected in $\mathcal V_i$ and whose heads are uniformly randomly selected in $\mathcal V_{i+1}$.

\bibliographystyle{IEEEtran}
\bibliography{IEEEabrv,biblio}

\end{document}